\newcommand{\ket}[1]{\mid\hspace*{-4pt}{#1}\rangle}
\newtheorem{proposition}{{\bf Proposition}}[section]
\newtheorem{lemma}{{\bf Lemma}}[section]
\newtheorem{theorem}{{\bf Theorem}}[section]
\newtheorem{corollary}{{\bf Corollary}}[section]
\def\mop#1{\mathop{\operator@font {#1\null}}}
\def\bigO{{\mop{O}}}
\def\hw{{\mop{hw}}}
\def\ord{{\mop{ord}}}
\def\Tr{{\mop{Tr}}}
\begin{document}
\title{A quantum circuit to find discrete logarithms on \hbox{ordinary binary elliptic curves} in~depth~$\bigO(\log^2n)$}

\author{Martin R{\"o}tteler\\
Microsoft Research\\
One Microsoft Way\\
Redmond, WA 98052\\
{\tt martinro@microsoft.com}
\and Rainer Steinwandt\\
Florida Atlantic University\\
Department of Mathematical Sciences\\
Boca Raton, FL 33431\\
{\tt rsteinwa@fau.edu}
}

\maketitle

\begin{abstract}
   Improving over an earlier construction by Kaye and Zalka \cite{KaZa04}, in \cite{MMCP09b} Maslov et al. describe an implementation of Shor's algorithm, which can solve the discrete logarithm problem on ordinary binary elliptic curves in quadratic depth $\bigO(n^2)$. In this paper we show that discrete logarithms on such curves can be found with a quantum circuit of depth $\bigO(\log^2n)$. As technical tools we introduce quantum circuits for ${\mathbb F}_{2^n}$-multiplication in depth $\bigO(\log n)$ and for ${\mathbb F}_{2^n}$-inversion in depth $\bigO(\log^2 n)$.
\end{abstract}

\section{Introduction}
The practical significance of the discrete logarithm problem on ordinary binary elliptic curves (see, e.\,g., \cite{FIPS1864}) renders these groups a natural target for implementing Shor's algorithm \cite{Sho97}. To implement an actual discrete logarithm computation, efficient quantum circuits to implement the pertinent curve arithmetic are needed, and a number of authors have explored circuits to implement the relevant elliptic curve operations \cite{KaZa04,MMCP09b,ARS12b}. When considering a complete implementation of Shor's algorithm in such a group, Maslov et al.'s proposal in \cite{MMCP09b} shows that a quadratic depth circuit is sufficient. The reason for the quadratic depth is essentially two-fold: a double-and-add computation to compute the relevant scalar multiplications in Shor's algorithm and a finite field inversion are the dominating operations. As shown in \cite{ARS12,ARS12b}, inversion in ${\mathbb F}_{2^n}$ can be implemented in depth $\bigO(n\log n)$, and so one may hope that the quadratic depth bound can indeed be overcome.

\paragraph{Our contribution.} Below we show that an appropriate organization of the scalar multiplication(s) in Shor's algorithm in combination with an improved ${\mathbb F}_{2^n}$-arithmetic enables a solution to the discrete logarithm problem on ordinary binary elliptic curves in depth $\bigO(\log^2n)$. To implement the necessary group operations we use complete binary Edwards curves as described in \cite{ARS12b}.

\paragraph{Structure of the paper.}
In the next section we briefly review some background on elliptic curves and Shor's algorithm. In particular we recall the the definition of binary Edwards curves as needed for the main part of the paper. Section~\ref{sec:parallel} details how with this curve representation the addition of any two curve points can be implemented in logarithmic depth.  Thereafter we discuss different options to organize the scalar multiplications in Shor's algorithm, including a tree-based approach with polylogarithmic depth. After addressing the technical point of deriving a unique representation of group elements, in Section~\ref{sec:theend} we establish our main result.

\section{Technical tools}
This section reviews some known results on ordinary binary elliptic curves and on computing discrete logarithms with Shor's algorithm.

\subsection{Quantum circuits for binary elliptic curve arithmetic}\label{sec:shortweierstrass}
For $n\in{\mathbb N}$ a positive integer, we denote by $\mathbb F_{2^n}$ a finite field of size $2^n$---for a cryptographic application, e.\,g., a digital signature scheme, a typical choice would be $n=163$ or $n=233$ \cite{FIPS1864}. To represent elements in $\mathbb F_{2^n}$ we use a polynomial basis representation. In other words we fix an irreducible polynomial $p\in{\mathbb F}_2[x]$ with coefficients in the integers modulo $2$ and identify $\mathbb F_{2^n}$ with the quotient ${\mathbb F}_2[x]/(p)$, so that each $\alpha\in{\mathbb F}_{2^n}$ has a unique expression of the form $\alpha=\sum_{i=0}^{n-1}\alpha_i\cdot(x^i+(p))$ with $(\alpha_0,\dots,\alpha_{n-1})\in{\mathbb F}_2^n$.

Using a \emph{short Weierstra{ss} form}, each ordinary binary elliptic curve can be represented by a polynomial equation
\begin{equation}
   y^2+xy=x^3+a_2x^2+a_6,\label{equ:weierstrass}
\end{equation}
where $a_2,a_6\in {\mathbb F}_{2^n}$ with $a_6\ne 0$ \cite[Chapters~13.1.4 and 13.1.5]{CoFr06}. More precisely, the elliptic curve represented by Equation~\eqref{equ:weierstrass} consists of the points
${\mathrm E}_{a_2,a_6}({\mathbb F}_{2^n}):=\{(u,v)\in{\mathbb F}_{2^n}: v^2+uv=u^3+a_2u^2+a_6\}\cup\{{\mathcal O}\}$,
where ${\mathcal O}\in {\mathrm E}_{a_2,a_6}({\mathbb F}_{2^n})$ is the unique projective point that is obtained when homogenizing Equation~\eqref{equ:weierstrass}. 
Hasse's bound implies that the size of ${\mathrm E}_{a_2,a_6}({\mathbb F}_{2^n})$ differs from $2^n$ by no more than $2^{1+(n/2)}+1$, and the subgroups of ${\mathrm E}_{a_2,a_6}({\mathbb F}_{2^n})$ considered in cryptographic applications typically have a very small cofactor. Hence $n$ is a natural parameter to measure the complexity of a quantum circuit to solve the discrete logarithm problem in a ${\mathrm E}_{a_2,a_6}({\mathbb F}_{2^n})$.

The set ${\mathrm E}_{a_2,a_6}({\mathbb F}_{2^n})$ has a group structure, but implementing this group law directly comes at a certain inconvenience: case distinctions have to be made, which require the implementation of a (nested) if-then-else statement (cf. the discussion in \cite{Sol98,ARS12b}). To avoid this issue, subsequently we use \emph{complete binary Edwards curves} as introduced by Bernstein et al. \cite{BLF08}. For $n\ge 3$ each ordinary elliptic curve is birationally equivalent to a complete binary Edwards curve, and we can represent such a curve by an equation 
\begin{equation}d_1(x+y)+d_2(x^2+y^2)=xy+xy(x+y)+x^2y^2\label{equ:edwards}
\end{equation}
with $d_1\in{\mathbb F}_{2^n}^*$ being non-zero, $d_2\in{\mathbb F}_{2^n}$ and  $\Tr(d_2)=1$.\footnote{The condition $\Tr(d_2)=1$ can equivalently be expressed as $\sum_{i=0}^{n-1}d_2^{2^i}=1$.} By ${\mathrm E}_{{\mathrm B},d_1, d_2}({\mathbb F}_{2^n})$ we denote the points in ${\mathbb F}_{2^n}^2$ satisfying Equation~\eqref{equ:edwards}. The group law on ${\mathrm E}_{{\mathrm B},d_1, d_2}({\mathbb F}_{2^n})$ is given by the formula
\begin{eqnarray*}
   x_3&=&\frac{d_1(x_1+x_2)+d_2(x_1+y_1)(x_2+y_2)+(x_1+x_1^2)(x_2(y_1+y_2+1)+y_1y_2)}{d_1+(x_1+x_1^2)(x_2+y_2)}\text{ and}\\
   y_3&=&\frac{d_1(y_1+y_2)+d_2(x_1+y_1)(x_2+y_2)+(y_1+y_1^2)(y_2(x_1+x_2+1)+x_1x_2)}{d_1+(y_1+y_1^2)(x_2+y_2)},
\end{eqnarray*}
where $(x_1,y_1),(x_2,y_2)\in {\mathrm E}_{{\mathrm B},d_1, d_2}({\mathbb F}_{2^n})$ can be arbitrary curve points---including the identity element $(0,0)$. To derive efficient addition formula one can (similarly as for a Weierstrass form) pass to projective coordinates. Bernstein et al. \cite{BLF08} show that from projective representations $(X_1, Y_1, Z_1)$, $(X_2, Y_2, Z_2)$ of two points one can derive a projective representation $(X_3, Y_3, Z_3)$ of their sum by means of 21 multiplications in ${\mathbb F}_{2^n}$, four multiplications by one of the constant $d_1$, $d_2$, one squaring and 15 additions in ${\mathbb F}_{2^n}$.
$$\addtolength{\arraycolsep}{-0.75pt}
\begin{array}{l}
\begin{array}{llllllllllll}
  W_1 &=& X_1 + Y_1,  & W_2 &=& X_2 + Y_2,& A &=& X_1\cdot (X_1 + Z_1),& B& =& Y_1 \cdot (Y_1 + Z_1),\\
  C   &=&Z_1\cdot Z_2,& D  &=& W_2\cdot Z_2,& E&=& d_1C^2,& H &=& (d_1Z_2 + d_2W_2)\cdot W_1 \cdot C,\\
  I &=& d_1Z_1\cdot C,& U &=& E + A\cdot D,& V &=& E + B\cdot D,& S &=& U\cdot V,\\\hline
\end{array}\\
\begin{array}{lll}
   X_3 &=& S\cdot Y_1 + (H + X_2\cdot(I + A\cdot(Y_2 + Z_2)))\cdot V\cdot Z_1,\rule{0ex}{2.3Ex}\\
   Y_3 &=& S\cdot X_1 + (H + Y_2\cdot(I + B\cdot(X_2 + Z_2)))\cdot U\cdot Z_1,\\
   Z_3 &=& S\cdot Z_1.
\end{array}
\end{array}$$
From an asymptotic point of view, it suffices to observe that the number of field operations is constant. It is not necessary, however, to perform these field operations sequentially, and \cite{ARS12b} suggest some parallelization, establishing the following upper bound for the depth of a point addition circuit, where $D_M(n)$ stands for the depth of an ${\mathbb F}_{2^n}$-multiplier $\ket{\alpha}\ket{\beta}\ket{\gamma}\mapsto\ket{\alpha}\ket{\beta}\ket{\gamma+\alpha\beta}$.

\begin{proposition}[{\cite[Proposition~3.3]{ARS12b}}]\label{prop:edadd} Let $(X_1,Y_1,Z_1)$ and $(X_2,Y_2,Z_2)$ be projective representations of two (not necessarily different) points $P_1, P_2\in{\mathrm E}_{{\mathrm B},d_1, d_2}({\mathbb F}_{2^n})$. Then the addition map
$$\ket{X_1}\ket{Y_1}\ket{Z_1}\ket{X_2}\ket{Y_2}\ket{Z_2}\ket{0}\ket{0}\ket{0}\longrightarrow\ket{X_1}\ket{Y_1}\ket{Z_1}\ket{X_2}\ket{Y_2}\ket{Z_2}\ket{X_3}\ket{Y_3}\ket{Z_3},$$
where $(X_3,Y_3,Z_3)$ is a projective representation of $P_1+P_2$, can be implemented in depth $5\cdot D_M(n)+4\cdot \max(D_M(n),2n)+\bigO(1)$. 
\end{proposition}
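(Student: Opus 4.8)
The plan is to compile the closed-form addition formula recalled above into a reversible circuit and to bound its depth by a careful parallel schedule of the field operations. Since the formula uses only a constant number of $\mathbb F_{2^n}$-operations---21 generic multiplications, four multiplications by a fixed constant $d_1$ or $d_2$, one squaring, and 15 additions---it is clear beforehand that the depth is $\bigO(D_M(n)+n)$; the real content is to make the constants come out as $5\cdot D_M(n)+4\cdot\max(D_M(n),2n)+\bigO(1)$. I would implement each generic multiplication by an out-of-place multiplier $\ket{\alpha}\ket{\beta}\ket{\gamma}\mapsto\ket{\alpha}\ket{\beta}\ket{\gamma+\alpha\beta}$ of depth $D_M(n)$, each addition by one layer of at most $n$ CNOTs (depth $\bigO(1)$), and the squaring together with each multiplication by $d_1$ or $d_2$---which are fixed $\mathbb F_2$-linear maps on $n$ bits---by a reversible linear circuit. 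Such a linear map, as well as its inverse, can be realized in depth at most $2n$ (for instance by factoring the associated invertible matrix over $\mathbb F_2$ into triangular and permutation factors), which is where the quantity $2n$ comes from.

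Next I would read off a layered schedule from the data-dependency DAG of the formula. A workable layout is: first compute $W_1,W_2$ and the auxiliary sums $X_1+Z_1$, $Y_1+Z_1$, $X_2+Z_2$, $Y_2+Z_2$ together with the constant-scaled values $d_1Z_1$, $d_1Z_2$, $d_2W_2$ (additions and linear maps, depth $\le 2n+\bigO(1)$); then, in one multiplication layer, $A,B,C,D$ in parallel; then $E$ from $C$ by a single $\mathbb F_2$-linear map combining the squaring and the scaling by $d_1$; then, in the next multiplication layers, the parallel products $A\cdot D$, $B\cdot D$, $I$, $A\cdot(Y_2+Z_2)$, $B\cdot(X_2+Z_2)$ and the two multiplications that assemble $H$, followed by $U,V$ (additions), then $S=U\cdot V$ in parallel with $U\cdot Z_1$ and $V\cdot Z_1$, and finally the last multiplications feeding $X_3,Y_3,Z_3$; the outputs $X_3,Y_3,Z_3$ are then just a concluding layer of additions. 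Counting the multiplication layers along the critical path and the (at most four) layers that house a squaring or a constant multiplication, while absorbing every pure addition layer into the $\bigO(1)$ term, gives the claimed depth for the forward pass.

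Finally, the target map leaves the six input registers fixed and returns the ancillae holding exactly $\ket{X_3}\ket{Y_3}\ket{Z_3}$, so all intermediate values must be erased. I would handle this in the usual fashion: copy $(X_3,Y_3,Z_3)$ into the output registers (depth $\bigO(1)$) and reverse the forward circuit, uncomputing each intermediate as soon as it goes dead so that most of the uncomputation overlaps the forward pass. The main obstacle is exactly this accounting: one has to check that the merged compute/uncompute schedule still fits within $5\cdot D_M(n)$ plus $4\cdot\max(D_M(n),2n)$ multiplication-or-linear layers---rather than roughly doubling the forward depth---and, as a side point, confirm the depth-$2n$ reversible-linear implementations of the squaring and the four constant multiplications (and of their inverses). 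With the layering above pinned down, these verifications are routine but somewhat tedious.
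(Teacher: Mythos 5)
You should first note that the paper does not actually prove this proposition: it is quoted verbatim from \cite[Proposition~3.3]{ARS12b}, and the only justification given here is the citation, together with the later remark that the term $2n$ inside the maximum accounts for the multiplication of a binary $n\times n$-matrix with a vector. Your overall strategy---compile the explicit Edwards addition formula into layers of parallel $\mathbb{F}_{2^n}$-multiplications, realize the squaring and the four constant multiplications as fixed $\mathbb{F}_2$-linear maps of depth at most $2n$, absorb the additions into the $\bigO(1)$ term, and uncompute intermediates---is certainly the strategy of the cited proof, and it does yield a bound of $\bigO(D_M(n)+n)$, which is all this paper ever uses (the text explicitly says that asymptotically it suffices that the number of field operations is constant).

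What your write-up does not do is establish the stated constant $5\cdot D_M(n)+4\cdot\max(D_M(n),2n)$, and you say so yourself: the merged compute/uncompute schedule is flagged as ``the main obstacle'' and then deferred as routine but tedious. That accounting is the entire nontrivial content of the proposition. Two concrete points. (i) Your forward schedule has roughly four generic multiplication layers plus one linear layer (note that the two multiplications assembling $H=((d_1Z_2+d_2W_2)\cdot W_1)\cdot C$ are sequential, so one of them must be pushed back into the first multiplication layer, where its inputs are already available); the stated nine-stage bound is then naturally explained as the forward pass followed by the reversal of all but the final layer, whose products are accumulated directly into the output registers $\ket{X_3}\ket{Y_3}\ket{Z_3}$ and need not be undone. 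So your worry that the uncomputation must somehow \emph{avoid} roughly doubling the forward depth is miscalibrated---the $5+4$ split essentially \emph{is} that doubling. (ii) Your depth-$2n$ claim for the in-place linear maps via triangular and permutation factors is asserted rather than proved, and the constant $2$ does not follow from that factorization without further argument; by contrast, an out-of-place matrix--vector multiplication $\ket{v}\ket{w}\mapsto\ket{v}\ket{w+Av}$, which is what the paper's remark refers to, is scheduled into at most $2n$ CNOT layers by an elementary greedy or edge-colouring argument. Neither point is fatal, but until the layer count and the uncomputation are pinned down, what you have proved is the $\bigO(D_M(n)+n)$ bound, not the proposition as stated.
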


From \cite{MMCP09b} it follows that we can choose $D_M(n)\in\bigO(n)$, and in Section~\ref{sec:fieldmult} we will show that through a suitable use of trees $D_M(n)$ can be chosen to be of logarithmic depth. As the number of field operations to add to curve points is constant, this establishes immediately the existence of a logarithmic depth circuit for point addition. To optimize the circuit depth, we can exploit the bound from Proposition~\ref{prop:edadd}: looking into the proof of \cite[Proposition~3.3]{ARS12b}, one recognizes that the term $2n$ occurring as argument of the maximum in Proposition~\ref{prop:edadd} describes  the multiplication of a binary $n\times n$-matrix with a binary vector. In the next section we will see that such a multiplication can be realized in logarithmic depth as well.

A technical issue that we address in Section~\ref{sec:lowdepthdivision} is the derivation of the unique (affine) representation from a projective representation of a curve point: The natural way to realize this is by means of an inversion in ${\mathbb F}_{2^n}$, but for none of the division circuits described in \cite{KaZa04,MMCP09b,ARS12b} a polylogarithmic depth bound is available. We modify the construction in \cite{ARS12b} to achieve polylogarithmic depth.

\subsection{Shor's algorithm}
For our discussion we assume that a generator $P\in{\mathrm E}_{{\mathrm B},d_1, d_2}({\mathbb F}_{2^n})$ of a cyclic subgroup of ${\mathrm E}_{{\mathrm B},d_1, d_2}({\mathbb F}_{2^n})$ is fixed and the order $\ord(P)$ of this group generator is known. Moreover, we assume that a group element $Q$ in the subgroup generated by $P$ is fixed; our goal is to find the unique integer $r\in\{1,\dots,\ord(P)\}$ such that $Q=r\cdot P$. 
The algorithm proceeds as follows. First, two registers of length $n+1$ qubits\footnote{Hasse's bound guarantees that $\ord(P)$ can be represented with $n+1$ bits.} are created and each qubit is initialized in the $\ket{0}$ state. Then a Hadamard transform $H$ is applied to each qubit, resulting in the state $\frac{1}{2^{n+1}} \sum_{k,\ell=0}^{2^{n+1}-1} \ket{k,\ell}$. Next, conditioned on the content of the register holding the label $k$ or $\ell$, we add the corresponding multiple of $P$ and $Q$, respectively, i.\,e., we 
implement the map 
\[ 
\frac{1}{2^{n+1}} \sum_{k,\ell=0}^{2^{n+1}-1} \ket{k,\ell} \mapsto 
\frac{1}{2^{n+1}} \sum_{k,\ell=0}^{2^{n+1}-1} \ket{k,\ell}\ket{k P + \ell Q}. 
\]
Hereafter, the third register is discarded and a quantum Fourier transform ${\rm QFT}_{2^{2\cdot(n+1)}}$ on $2\cdot(n+1)$ qubits is computed. Finally, the state of the first two registers---which hold a total of $2\cdot(n+1)$ qubits---is measured. As shown in \cite{Sho97,Kitaev:97}, the factor $r$ can be computed from this measurement data via classical post-processing. The corresponding quantum circuit is 
shown in Figure \ref{fig:shorcircuit}. In the following sections, we will be concerned with parallelizing the parts of the circuit in this figure. In Section~\ref{sec:lowdepthdivision} we will address the problem of having a non-unique representation of curve points, as the above description of Shor's algorithm implicitly assumes group elements to have a unique representation.

\begin{figure}[hbt]
\centerline{
\includegraphics[width=0.9\textwidth]{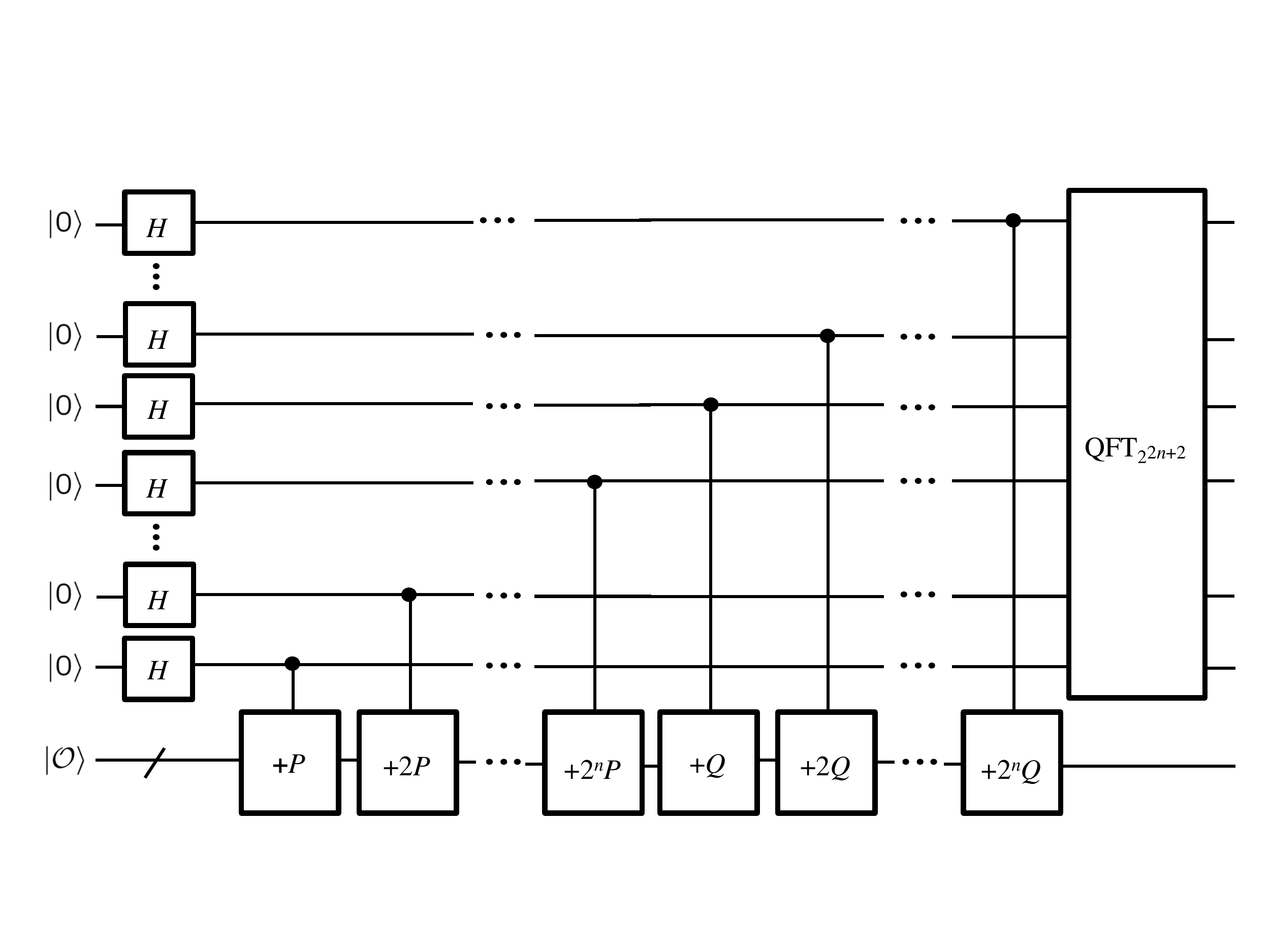}
}\vspace*{-9Ex}
\caption{\label{fig:shorcircuit} Shor's algorithm to compute the discrete logarithm in the subgroup of an elliptic curve generated by a point $P$. The input to the problem is a point $Q$, and the task is to find $r\in\{1,\dots,\ord(P)\}$ such that $Q = r \cdot P$. The circuit naturally decomposes into three parts, namely (i) the Hadamard layer on the left, (ii) a double scalar multiplication (in this figure implemented as a cascade of point additions), and (iii) the quantum Fourier transform ${\rm QFT}$ at the end. We show that each of these parts can be implemented in a circuit depth of $\bigO(\log^2n)$ to obtain the main result of this paper.}
\end{figure}

\section{Parallelizing Shor's algorithm}\label{sec:parallel}
To reduce the circuit depth, we parallelize Shor's algorithm on two different levels: (i) the computation of ${\mathbb F}_{2^n}$-multiplications is parallelized and (ii) the computation of the scalar products $k\cdot P+\ell\cdot Q$ is parallelized.

\subsection{Multiplying ${\mathbb F}_{2^n}$-elements in depth $\bigO(\log n)$}\label{sec:fieldmult}
A simple observation that will be useful is that we can implement the map $$\ket{x}\underbrace{\ket{0}\dots\ket{0}}_m\longmapsto\ket{x}\underbrace{\ket{x}\dots\ket{x}}_m\quad (x\in\{0,1\})$$ in depth $\bigO(\log m)$ by arranging $m$ CNOT gates as a tree. Figure~\ref{fig:multifanout}, which derives from \cite[Figure~2]{GHM02}, shows such a `multi-fan-out CNOT with $\ket{0}$-input' for the case $m=7$. We note that in general such a tree is not functionally equivalent to a CNOT with fan-out greater than $1$, but for the case of a $\ket{0}$-input this equivalence holds, and for our purposes this is the only case needed.

\begin{figure}[htb]
  \begin{center}
     \includegraphics[scale=0.7, trim = 0mm 200mm 140mm 20mm]{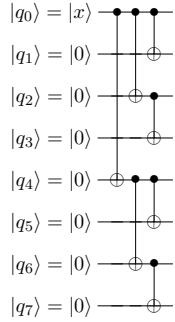}
  \end{center}
\caption{obtaining $m$ `copies' of a qubit in depth $\bigO(\log m)$}\label{fig:multifanout}
\end{figure}

As starting point to implement multiplication in ${\mathbb F}_{2^n}$ we use the circuit proposed by Maslov et al. in \cite{MMCP09b}---which builds on a classical Mastrovito multiplier \cite{Mas88,Mas91,MaHa04}. This construction reduces the task of multiplying two elements in ${\mathbb F}_2[x]/(p)$ to implementing a quantum circuit that evaluates two matrix-vector multiplications with a Toeplitz matrix, one matrix-vector multiplication with a matrix that depends only on the polynomial $p$, and an addition in ${\mathbb F}_2^n$. More specifically, the coefficients $(\gamma_0,\dots,\gamma_{n-1})$ of the product  $(\sum_{i=0}^{n-1}{\alpha_i}\cdot(x^i+(p)))\cdot(\sum_{i=0}^{n-1}{\beta_i}\cdot(x^i+(p)))$ are obtained as follows, where $M\in{\mathbb F}_2^{n\times (n-1)}$ is independent of the specific field elements to be multiplied; the matrix $M$ depends only on the irreducible polynomial $p$ defining the underlying finite field ${\mathbb F}_2[x]/(p)$:
\begin{equation}{\vec\gamma}=
\underbrace{\left(\begin{array}{ccccc}
\alpha_0 & 0 & \dots & 0 & 0\\
\alpha_1 &\alpha_0 &\dots & 0 & 0\\
\vdots&\vdots&\ddots&\vdots&\vdots\\
\alpha_{n-2}&\alpha_{n-3}&\dots&\alpha_0&0\\
\alpha_{n-1}&\alpha_{n-2}&\dots&\alpha_1&\alpha_0
\end{array}\right)}_{=L}\cdot{\vec\beta}+M\cdot
\underbrace{\left(\begin{array}{cccccc}
0&\alpha_{n-1}&\alpha_{n-2}&\dots&\alpha_2&\alpha_1\\
0&0&\alpha_{n-1}&\dots&\alpha_3&\alpha_2\\
\vdots&\vdots&\vdots&\ddots&\vdots&\vdots\\
0&0&0&\dots&\alpha_{n-1}&\alpha_{n-2}\\
0&0&0&\dots&0&\alpha_{n-1}
\end{array}\right)}_{=U}\cdot{\vec\beta}\label{equ:multformula}
\end{equation}

\subsubsection{Computing the products $L\cdot\vec\beta$ and $U\cdot\vec\beta$}
To implement the multiplications of $L$ and $U$ with $\vec\beta$ respectively, we first observe that---considering both the computation of  $L\cdot\vec\beta$ and $U\cdot\vec\beta$ combined---each coefficient $\alpha_i$ ($i=0,\dots,n-1$) occurs in exactly $n$ products of the form $\alpha_i\beta_j$. Similarly, each coefficient $\beta_j$ ($j=0,\dots,n-1$) occurs in a total of exactly $n$ products. We want to compute \emph{all} of these ${\mathbb F}_2$-products in parallel. So we ensure that $n$ `copies' of each of $\beta_0,\dots,\beta_{n-1}$ are available, using a `multi-fan-out CNOT with $\ket{0}$-input' for each $\beta_i$. As the CNOT trees for $\beta_i$ and $\beta_j$ with $i\ne j$ operate on disjoint wires, they can be executed in parallel and implemented in depth $\bigO(\log n)$. Analogously, using a `multi-fan-out CNOT with $\ket{0}$-input' for each of $\alpha_0,\dots,\alpha_{n-1}$ we can---in depth $\bigO(\log n)$ and in parallel to the trees for copying the $\beta_i$-values---provide $n$ `copies' of each of $\alpha_0,\dots,\alpha_{n-1}$.

Having, at the cost of $\bigO(n^2)$ qubits, all these copies at our disposal, we can now, in depth $1$, compute all products $\alpha_i\cdot\beta_j$ that are necessary to find $L\cdot\vec\beta$ and $U\cdot\vec\beta$ in parallel, using $n^2$ Toffoli gates. Having evaluated all these products we can simply compute each entry of $L\cdot\vec\beta$ and $U\cdot\vec\beta$ by using a a depth $\bigO(\log n)$ addition tree for each entry of these two vectors. Only CNOT gates (and no further ancillae) are needed for this. Figure~\ref{fig:UBandLB} shows an example for the case $n=2$, i.\,e., we have $$L=\left(\begin{array}{cc}\alpha_0&0\\\alpha_1&\alpha_0\end{array}\right), U=\left(\begin{array}{cc}0&\alpha_1\end{array}\right).$$In this case all occurring multiplications can be evaluated in depth $1=\log_2(2)$, and the final addition trees reduce to a single CNOT gate to compute the `last' entry $\alpha_1\beta_0+\alpha_0\beta_1$ of $L\cdot\vec\beta$. 
\begin{figure}[thb]
 \begin{center}
     \includegraphics[scale=0.7, trim = 0mm 167mm 140mm 20mm]{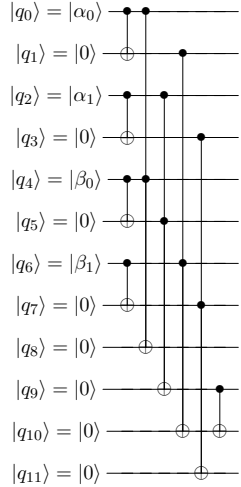}
  \end{center}
\caption{computing $L\cdot \vec\beta$ and $U\cdot \vec\beta$ for $n=2$: all four multiplications occur in parallel}\label{fig:UBandLB}
\end{figure}

\subsubsection{Multiplication by the constant matrix $M$}
From Equation~\eqref{equ:multformula}, we see that the vector $\vec\eta=U\cdot\vec\beta\in{\mathbb F}_{2}^{n-1}$ needs to be multiplied from the left with the fixed $n\times(n-1)$-matrix $M$. Writing $\hw(M_i)$ for the Hamming weight of the $i^\text{th}$ column of $M$ and denoting by $\eta_i$ the $i^\text{th}$ entry of $\vec\eta$, we first create $\hw(M_i)\le n$ copies of $\eta_i$ ($i=1,\dots,n-1$), requiring $\bigO(n^2)$ qubits. For this we use again `multi-fan-out CNOT gates with $\ket{0}$-input' that operate in parallel and can be realized in depth $\bigO(\log n)$. This allows us to compute all $n$ entries of $M\cdot\vec\eta$ in parallel: for each entry of the result we can use an $\bigO(\log n)$-depth addition tree that computes the scalar product of the corresponding row of $M$ with $\vec\eta$. As the matrix $M$ is fixed, this can be done by means of CNOT gates. Figure~\ref{fig:Mmult} shows a `worst case tree' of depth $3=\log_2(8)$ for the case $n-1=8$: multiplying a matrix row consisting entirely of $1$s with $(\eta_0,\dots,\eta_7)$.
\begin{figure}[hbt]
 \begin{center}
     \includegraphics[scale=0.7, trim = 0mm 200mm 140mm 20mm]{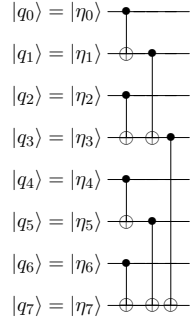}
  \end{center}
\caption{computing $(1,1,1,1,1,1,1,1)\cdot(\eta_0,\eta_1,\eta_2,\eta_3,\eta_4,\eta_5,\eta_6,\eta_7)^{\mathrm t}$ in depth $3$}\label{fig:Mmult}
\end{figure}

Finally, to complete the evaluation of Equation~\eqref{equ:multformula}, we add the binary vector $L\cdot\vec \beta$ to $M\cdot (U\cdot \vec B)$ by means of $n$ CNOT gates that operate in parallel. With all involved steps---computing $L\cdot\vec\beta$ and $U\cdot\vec\beta$, finding $M\cdot U\vec\beta$, and determining $L\vec\beta+MU\vec\beta$---being realizable in depth $\bigO(\log n)$ we obtain the following result.
\begin{theorem}[${\mathbb F}_{2^n}$-multiplication in logarithmic depth]\label{prop:GF2nlowdepthmult} There is a polynomial-size quantum circuit of depth $\bigO(\log n)$ which on input polynomial basis representations of $\alpha,\beta\in {\mathbb F}_{2^n}$ computes a polynomial basis representation of the product $\alpha\cdot\beta\in{\mathbb F}_{2^n}$.
 \end{theorem}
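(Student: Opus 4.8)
The plan is to assemble the theorem by following the three-stage decomposition already laid out before the statement: (i) create the needed copies of the input coordinates $\alpha_0,\dots,\alpha_{n-1}$ and $\beta_0,\dots,\beta_{n-1}$, (ii) compute all the bit-products and form $L\cdot\vec\beta$ and $U\cdot\vec\beta$, and (iii) multiply $U\cdot\vec\beta$ by the fixed matrix $M$ and add $L\cdot\vec\beta$. First I would invoke the Mastrovito decomposition in Equation~\eqref{equ:multformula}, which reduces the field multiplication to linear-algebra primitives over ${\mathbb F}_2$; this is purely classical and contributes no depth issue. Then I would argue that each $\beta_j$ occurs in exactly $n$ of the products $\alpha_i\beta_j$ needed across $L\cdot\vec\beta$ and $U\cdot\vec\beta$ combined (and symmetrically for each $\alpha_i$), so that $n$ fresh $\ket{0}$-ancillae per input bit suffice; the `multi-fan-out CNOT with $\ket{0}$-input' of Figure~\ref{fig:multifanout} populates these in depth $\bigO(\log n)$, and since the trees for distinct input bits act on disjoint wires they run in parallel, keeping the whole copying stage in depth $\bigO(\log n)$ at the cost of $\bigO(n^2)$ qubits.

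Next I would observe that with all the copies in place the $n^2$ products $\alpha_i\beta_j$ can be computed by $n^2$ Toffoli gates acting on disjoint triples of wires, hence in depth $1$. Each entry of $L\cdot\vec\beta$ and of $U\cdot\vec\beta$ is then an ${\mathbb F}_2$-sum of at most $n$ of these precomputed products, so a balanced binary addition tree of CNOTs of depth $\bigO(\log n)$ computes it, and the (at most $2n-1$) trees run in parallel on disjoint wires. For the multiplication by the fixed matrix $M$ I would repeat the same pattern one level up: for each $i$ make $\hw(M_i)\le n$ copies of $\eta_i=(U\cdot\vec\beta)_i$ via parallel $\ket{0}$-input fan-out CNOTs in depth $\bigO(\log n)$, then compute each of the $n$ entries of $M\cdot\vec\eta$ as the ${\mathbb F}_2$-inner product of a fixed row of $M$ with $\vec\eta$ using a depth-$\bigO(\log n)$ CNOT addition tree, all $n$ trees in parallel. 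A final layer of $n$ parallel CNOTs adds $L\cdot\vec\beta$ to $M\cdot(U\cdot\vec\beta)$, yielding $\vec\gamma$; composing the constantly many $\bigO(\log n)$-depth stages gives total depth $\bigO(\log n)$, and the gate and qubit counts are $\bigO(n^2)$, hence polynomial.

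I do not anticipate a deep obstacle here; the statement is essentially a careful depth-accounting over a known classical multiplier, so the real content is bookkeeping. The one point that needs genuine (if mild) care is the correctness of the $\ket{0}$-input fan-out tree as a true copy operation: a CNOT-tree is not in general equivalent to a fan-out gate, and one must check that when the targets start in $\ket{0}$ the tree does realize $\ket{x}\ket{0}^{\otimes m}\mapsto\ket{x}^{\otimes m+1}$ — this is exactly the caveat flagged around Figure~\ref{fig:multifanout}, and it is the only place where one could slip. A secondary point worth stating explicitly is that the circuit as described leaves the $\bigO(n^2)$ ancillae dirty; for use as a reversible subroutine inside Shor's algorithm one would run the copying trees in reverse after the additions, which at most doubles the depth and so does not affect the $\bigO(\log n)$ bound — but since the theorem only asserts a circuit computing a polynomial basis representation of $\alpha\cdot\beta$ (not a clean one), this is optional and I would relegate it to a remark.
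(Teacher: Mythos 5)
Your proposal is correct and follows essentially the same route as the paper: the Mastrovito decomposition of Equation~\eqref{equ:multformula}, parallel `multi-fan-out CNOTs with $\ket{0}$-input' to make $n$ copies of each input bit, a single layer of $n^2$ Toffoli gates for all bit-products, logarithmic-depth CNOT addition trees for the entries of $L\cdot\vec\beta$ and $U\cdot\vec\beta$, the same fan-out-and-tree pattern for the fixed matrix $M$, and a final parallel CNOT layer. Your added remarks on the $\ket{0}$-input caveat and on uncomputing the ancillae are consistent with, and slightly more explicit than, the paper's own discussion.
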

The above-described technique to multiply a vector with the fixed matrix $M$ can also be used to implement other matrix-vector multiplications: Given a binary $n\times n$ matrix $A\in{\mathbb F}_2^{n\times n}$ and a vector $\vec \gamma\in{\mathbb F}_2^n$,  we first use $n$ `multi-fan-out CNOT with $\ket{0}$-input' to create $n$ copies of $\vec\gamma$, investing $\bigO(n^2)$ qubits. Handling all entries of $\vec\gamma$ in parallel, this can be done in depth $\bigO(\log n)$. Hereafter we can use an addition tree to compute the necessary $n$ scalar products in parallel, just as in the discussion of the matrix $M$. We can apply this observation to the matrix multiplications occurring in the proof of Proposition~\ref{prop:edadd} given in \cite{ARS12b}, which replaces the argument $2n$ of $\max$ with a function in $\bigO(\log n)$. In combination with Theorem~\ref{prop:GF2nlowdepthmult}, we obtain the following.
\begin{corollary}\label{cor:lowdepthadd}
   Using projective coordinates, a projective representation of the sum of two points on a complete binary Edwards curve ${\mathrm E}_{{\mathrm B},d_1, d_2}({\mathbb F}_{2^n})$ can be computed in depth $\bigO(\log n)$.
\end{corollary}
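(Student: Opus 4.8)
The plan is to obtain Corollary~\ref{cor:lowdepthadd} by combining the depth bound of Proposition~\ref{prop:edadd} with the logarithmic-depth multiplier of Theorem~\ref{prop:GF2nlowdepthmult}, after first disposing of the one term in that bound which is not yet logarithmic. Recall that Proposition~\ref{prop:edadd} yields a point-addition circuit of depth $5\cdot D_M(n)+4\cdot\max(D_M(n),2n)+\bigO(1)$, where, as noted immediately after that proposition, the summand $2n$ inside the maximum is precisely the cost of multiplying a fixed binary $n\times n$-matrix by a binary vector in the circuit of \cite{ARS12b}.

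First I would make precise the claim that such a matrix--vector product can be done in depth $\bigO(\log n)$. This is exactly the construction sketched in the paragraph following Theorem~\ref{prop:GF2nlowdepthmult}: given $A\in{\mathbb F}_2^{n\times n}$ and $\vec\gamma\in{\mathbb F}_2^n$, use $n$ `multi-fan-out CNOT with $\ket{0}$-input' gates to place $n$ copies of $\vec\gamma$ onto $\bigO(n^2)$ fresh ancillae in depth $\bigO(\log n)$, and then compute the $n$ output coordinates simultaneously, each by an $\bigO(\log n)$-depth binary CNOT addition tree selecting the entries picked out by the corresponding row of $A$ (which is legal since $A$ is a classically known constant). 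Since these trees act on disjoint wires they run in parallel, so the whole matrix--vector multiplication has depth $\bigO(\log n)$; I would also note that the fan-out trees can be uncomputed afterwards in depth $\bigO(\log n)$, returning the borrowed ancillae to $\ket{0}$, so the substitution does not disturb reversibility.

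Next I would substitute this into the proof of \cite[Proposition~3.3]{ARS12b}: everywhere that proof invokes a binary matrix--vector multiplication contributing depth $2n$, replace it by the $\bigO(\log n)$ circuit just described. The effect on the stated bound is exactly to replace the argument $2n$ of the $\max$ by a function in $\bigO(\log n)$, giving a point-addition circuit of depth $5\cdot D_M(n)+4\cdot\max(D_M(n),\bigO(\log n))+\bigO(1)$. Plugging in $D_M(n)\in\bigO(\log n)$ from Theorem~\ref{prop:GF2nlowdepthmult}, both the $5\cdot D_M(n)$ term and the $\max$ term are $\bigO(\log n)$, the additive $\bigO(1)$ is absorbed, and the total depth is $\bigO(\log n)$. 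The circuit stays of polynomial size because only a constant number of field operations and matrix multiplications are involved, each using $\bigO(n^2)$ ancillae and polynomially many gates.

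The only place needing care — and it is minor — is the bookkeeping of ancillae: the logarithmic-depth multiplier and the logarithmic-depth matrix--vector routine each borrow $\bigO(n^2)$ auxiliary qubits, and one must check that the constantly many such sub-circuits appearing in the point-addition formula can either be given disjoint ancilla registers (keeping the size polynomial) or have their ancillae cleanly uncomputed between uses without inflating the depth beyond $\bigO(\log n)$. Since each uncomputation is itself an $\bigO(\log n)$-depth tree and only $\bigO(1)$ of them occur in sequence, this is unproblematic, and no genuinely new idea beyond Theorem~\ref{prop:GF2nlowdepthmult} and the matrix--vector observation is required.
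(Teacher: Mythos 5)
Your argument is exactly the paper's: it replaces the $2n$ term in the bound of Proposition~\ref{prop:edadd} by an $\bigO(\log n)$-depth fan-out-tree matrix--vector multiplication and then substitutes $D_M(n)\in\bigO(\log n)$ from Theorem~\ref{prop:GF2nlowdepthmult}. The additional remarks on uncomputing ancillae are a harmless (and welcome) elaboration, but the route is the same.
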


\subsection{Organizing the computation of $k\cdot P+\ell\cdot Q$}\label{sec:doubleandadd101}
An essential part of an implementation of Shor's algorithm is a circuit which on input (binary representations of) $k=\sum_{i=0}^n k_i2^i$ and $\ell=\sum_{i=0}^n\ell_i2^i$ computes a (unique) representation of $k\cdot P+\ell\cdot Q$---because of Hasse's bound, we can assume that $k$ and $\ell$ are represented with (at most) $n+1$ qubits each.

\subsubsection{Sequential double-and-add}\label{sec:seq2a}
The approach taken in \cite{MMCP09,MMCP09b} can be seen as implementation of a right-to-left version of the double-and add-algorithm\footnote{More specifically, Maslov et al. first perform all necessary additions of the points $2^i\cdot P$ and then continue with the necessary additions of the points $2^i\cdot Q$; this change of addition order does not affect the circuit depth.}:

\begin{center}
\begin{minipage}{0.6\textwidth}
\begin{program}
R\gets{\mathcal O}\rcomment{\# initialize result to identity}
\FOR i=0\TO n \STEP 1
  \IF{k_i=1}\THEN R\gets R+2^i\cdot P
  \untab\untab\IF{\ell_i=1}\THEN R\gets R+2^i\cdot Q
\untab\untab\untab|return|\ R\rcomment{\# $R=kP+\ell Q$}
\end{program}
\end{minipage}
\end{center}

Applying Corollary~\ref{cor:lowdepthadd}, each point addition requires circuit depth $\bigO(\log n)$, and so we immediately obtain an $\bigO(n\log n)$-depth circuit to compute $kP+\ell Q$. 
A feature of this particular strategy is that all points $2^iP$ and $2^iQ$ can be precomputed classically, and in all adders involved one argument is constant. This can be exploited to simplify the addition circuits, but for a realistic value of $n$ several hundred different addition circuits are involved. A left-to-right formulation of the double-and-add algorithm offers an alternative that involves only three types of circuits, but this uniformity comes at the cost of a general doubling circuit:

\begin{center}
\begin{minipage}{0.6\textwidth}
\begin{program}
R\gets{\mathcal O}\rcomment{\# initialize result to identity}
\IF{k_{n}=1}\THEN R\gets R+P\rcomment{\quad\# adjust starting value based on most significant bit}
\untab\untab\IF{\ell_{n}=1}\THEN R\gets R+Q
\untab\untab\FOR i=n-1\TO 0\STEP -1
R\gets 2\cdot R
\IF {k_i=1} \THEN R\gets R+P
\untab\untab\IF{\ell_i=1}\THEN R\gets R+Q
\untab\untab\untab|return|\ R\rcomment{\# $R=kP+\ell Q$}
\end{program}
\end{minipage}
\end{center}
This algorithm processes $k$ and $\ell$ simultaneously, so that the total number of doublings is $n$ (rather than $2n$)---the latter technique goes back to Straus \cite{Str64} and is also known as \emph{Shamir's trick} (cf. \cite{ElG84}). Realizing point doubling and addition of $P$ respectively $Q$ in depth $\bigO(\log n)$, this yields again an $\bigO(n\log n)$-depth circuit to find $kP+\ell Q$.

\subsubsection{Parallelized double-and-add}\label{sec:doubleAdd}
To find $kP+\ell Q$ with a quantum circuit of smaller depth, we parallelize the computation of $kP+\ell Q$. To simplify the description we assume that the length $n+1$ of the binary representation of $k$ and of $\ell$ is a power of $2$---if this is not the case, the binary expansions can be padded with $0$s accordingly.

\begin{center}
\begin{minipage}{0.9\textwidth}
\begin{program}
  (R_{0}^{(\log_2(n+1))},\dots,R_{n}^{(\log_2(n+1))})\gets(k_0\cdot 2^0P,\dots,k_n\cdot 2^nP)
  (S_{0}^{(\log_2(n+1))},\dots,S_n^{(\log_2(n+1)}))\gets(\ell_0\cdot 2^0S,\dots,\ell_n\cdot 2^nS)
   \FOR i=\log_2(n+1)-1 \TO 0 \STEP -1
      (R^{(i)}_{0},\dots,R^{(i)}_{j},\dots,R^{(i)}_{2^i-1})=(R^{(i+1)}_0+R_1^{(i+1)},\dots,R^{(i+1)}_{2j}+R^{(i+1)}_{2j+1},\dots, R^{(i+1)}_{2^{i+1}-2}+R^{(i+1)}_{2^{i+1}-1})
       (S^{(i)}_{0},\dots,S^{(i)}_{j},\dots,S^{(i)}_{2^i-1})=(S^{(i+1)}_0+S_1^{(i+1)},\dots,S^{(i+1)}_{2j}+S^{(i+1)}_{2j+1},\dots, S^{(i+1)}_{2^{i+1}-2}+S^{(i+1)}_{2^{i+1}-1})\untab 
   |return|\ R_{0}^{(0)}+S_{0}^{(0)}
\end{program}
\end{minipage}
\end{center}
The summation computed here is essentially the same as in the first sequential version of the double-and-add algorithm we discussed, therewith following the approach in \cite{MMCP09,MMCP09b}. Like Maslov et al. we parallelize the execution of gates that operate on disjoint wires, but we process more than one point $2^iP$ respectively $2^jQ$ at a time. The (affine) points $2^0P,\dots, 2^nP$ and $2^0Q,\dots,2^nQ$ can be precomputed classically, and the leaves can be initialized with a sequence of CNOT gates, conditioned on the individual bits of the binary representation of $k$ and $\ell$. Using `multi-fan-out CNOTs with $\ket{0}$-input', we can create $2n$ copies of the binary representation of $k$ and $2n$ copies of the binary representation of $\ell$ in depth $\bigO(\log n)$. Each leaf corresponds to a separate quantum register of linear length, initialized with $\ket{\mathcal O}=\ket{(0,0,1)}$, and hence the CNOT gates for different points operate on disjoint wires. Having $2n$ copies of each coefficient $k_i$ and $\ell_j$ available, we can copy all $2n$ qubits representing a point $2^iP$ or $2^jQ$ on ${\mathrm E}_{{\mathrm B},d_1, d_2}({\mathbb F}_{2^n})$ in constant depth.\footnote{The `projective coordinate' of $2^iP$ and $2^jQ$ is always $1$.} In particular, the complete initialization can be realized in depth $\bigO(\log n)$.

\begin{figure}
\begin{center}
\begin{tikzpicture}[level/.style={sibling distance=65mm/#1}]
\node {$kP+\ell Q$}
  child {node  {$kP$}
    child {node  {$\sum\limits_{i<\frac{n+1}{2}}k_iP_i$}
      child {node {$\vdots$}
        child {node   {$k_0{2^0}P$}}
        child {node   {$k_12^1P$}}
      } 
      child {node {$\vdots$}}
    }
    child {node  {$\sum\limits_{i\ge\frac{n+1}{2}}k_iP_i$}
      child {node {$\vdots$}}
      child {node {$\vdots$}}
    }
  }
  child {node  {$\ell Q$}
    child {node {$\sum\limits_{i<\frac{n+1}{2}}\ell_iQ_i$}
      child {node {$\vdots$}}
      child {node {$\vdots$}}
    }
  child {node  {$\sum\limits_{i\ge\frac{n+1}{2}}\ell_iQ_i$}
    child {node {$\vdots$}}
    child {node {$\vdots$}
      child {node  {$\ell_{n-1}2^{n-1}Q$}}
      child {node {$\ell_n2^nQ$}}}}};
\end{tikzpicture}
\caption{computing $kP+\ell Q$: parallelized double-and-add}\label{fig:treeaddition}
\end{center}
\end{figure}

 To proceed from one tree level to the next, a layer of addition circuits is used that operate in parallel, each circuit adding two curve points. Realizing each such addition circuit in logarithmic depth as in Corollary~\ref{cor:lowdepthadd} results in a quantum circuit of depth $\bigO(\log^2 n)$ to compute $kP+\ell Q$. Figure~\ref{fig:treeaddition} shows the resulting tree structure to compute $kP+\ell Q$. We obtain the following lemma.

\begin{lemma}
Using projective coordinates, for a complete binary Edwards curve the map $$\ket{k}\ket{\ell}\ket{0}\mapsto\ket{k}\ket{\ell}\ket{kP+\ell Q},$$ where $0\le k,\ell<2^{n+1}$, can be realized in depth $\bigO(\log^2 n)$.
\end{lemma}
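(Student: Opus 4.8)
The plan is to verify that the tree-based algorithm displayed immediately above does what is claimed, in the stated depth, by combining three ingredients already in hand: (i) correctness of the summation pattern, (ii) the logarithmic-depth point addition of Corollary~\ref{cor:lowdepthadd}, and (iii) the logarithmic-depth initialization of the leaves via multi-fan-out CNOTs. First I would argue correctness. The leaves are initialized to the $2(n+1)$ (projective representations of the) points $k_i\cdot 2^iP$ and $\ell_i\cdot 2^iQ$ for $0\le i\le n$; here $k_i\cdot 2^iP$ means $2^iP$ if $k_i=1$ and $\mathcal O=(0,0,1)$ if $k_i=0$, and the group law on a complete binary Edwards curve (being complete) handles the identity with no case distinction. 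Each internal level pairs up adjacent partial sums and adds them, so a trivial induction on the level index $i$, running from $\log_2(n+1)$ down to $0$, shows that after level $i$ the register $R^{(i)}_j$ holds $\sum_{t=0}^{2^{\log_2(n+1)-i}-1} k_{j\cdot 2^{\log_2(n+1)-i}+t}\,2^{(\cdot)}P$, and symmetrically for $S^{(i)}_j$; at $i=0$ this gives $R^{(0)}_0 = \sum_{i=0}^{n}k_i 2^iP = kP$ and $S^{(0)}_0=\ell Q$, and the final addition returns $kP+\ell Q$. (If $n+1$ is not a power of two we pad the binary expansions of $k$ and $\ell$ with leading zeros, which only adds $\mathcal O$-leaves and does not change the sum.)

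Next I would bound the depth. The leaf initialization proceeds by first producing $2n$ copies of each bit $k_i$ and each bit $\ell_j$ using multi-fan-out CNOTs with $\ket0$-input, which is depth $\bigO(\log n)$ as established in Section~\ref{sec:fieldmult}; then, since each leaf register is a disjoint block of $\bigO(n)$ qubits initialized to $\ket{(0,0,1)}$ and the points $2^iP,2^jQ$ are classically known constants, writing the chosen constant into a leaf conditioned on the relevant bit is a parallel layer of CNOTs of depth $\bigO(1)$ (only the affine coordinates need conditional copying; the projective coordinate stays $1$). Hence initialization costs $\bigO(\log n)$. The tree has $\log_2(n+1)+1 = \bigO(\log n)$ levels, and at each level the point additions act on pairwise-disjoint sets of wires, so the whole level is one point-addition circuit executed in parallel; by Corollary~\ref{cor:lowdepthadd} (whose depth bound already incorporates the logarithmic-depth multiplier of Theorem~\ref{prop:GF2nlowdepthmult} and the logarithmic-depth matrix-vector step replacing the $2n$ term in Proposition~\ref{prop:edadd}) this costs $\bigO(\log n)$ per level. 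Multiplying $\bigO(\log n)$ levels by $\bigO(\log n)$ depth each, and adding the $\bigO(\log n)$ initialization and the single final addition, gives total depth $\bigO(\log^2 n)$. Polynomial size is clear: $\bigO(n)$ leaves, each of $\bigO(n)$ qubits, each adder using $\bigO(n^2)$ ancillae for the multiplier fan-outs, for $\bigO(n^3)$ qubits and a polynomial gate count.

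The one genuine subtlety — the main obstacle — is that the map asserted in the lemma, $\ket{k}\ket{\ell}\ket0 \mapsto \ket{k}\ket{\ell}\ket{kP+\ell Q}$, must leave $\ket k$ and $\ket\ell$ intact and must not leave garbage entangled with the output register, whereas the tree as described fills $\bigO(n)$ intermediate registers (the $R^{(i)}_j$, $S^{(i)}_j$ at all levels, plus the fan-out copies of the bits and the multiplier ancillae) with nonzero data. The fix is the standard compute–copy–uncompute discipline: run the tree forward to obtain $kP+\ell Q$ in a working register, CNOT (coordinate-wise) this result into the clean output register $\ket0$, then run the entire tree — including leaf initialization and all multi-fan-out CNOTs — in reverse to restore every ancilla to $\ket0$ and every input bit-copy, returning $\ket k$ and $\ket\ell$ to their original form. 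Reversal at most doubles the depth, so it remains $\bigO(\log^2 n)$. One should remark that the addition circuit of Proposition~\ref{prop:edadd} is already reversible and keeps its inputs, so each level and hence the whole tree is straightforwardly invertible; the only point to check carefully is that the projective-to-affine normalization is \emph{not} performed inside this lemma (the output is a projective representation), so no inversion is needed here and the depth bound is not affected by the division circuit of Section~\ref{sec:lowdepthdivision}.
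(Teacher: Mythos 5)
Your proposal is correct and follows essentially the same route as the paper: initialize the $2(n+1)$ leaves with the classically precomputed points conditioned on the bits of $k$ and $\ell$ via multi-fan-out CNOTs in depth $\bigO(\log n)$, then traverse $\bigO(\log n)$ tree levels of parallel point additions, each of depth $\bigO(\log n)$ by Corollary~\ref{cor:lowdepthadd}. Your closing remarks on uncomputing the intermediate registers by the compute--copy--uncompute discipline and on deferring the projective-to-affine normalization go slightly beyond what the paper spells out at this point, but they are consistent with it and only affect the depth by a constant factor.
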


%
%

\subsubsection{Deriving a unique point representation}\label{sec:lowdepthdivision}
In our discussion of the elliptic curve arithmetic we focused on the use of projective coordinates, as this avoids the use of a (costly) ${\mathbb F}_{2^n}$-inversion. We pay for this by a non-unique point representation, however. To pass from projective coordinates $(X_1,Y_1,Z_1)\in{\mathbb F}_{2^n}^3$ for a point on ${\mathrm E}_{{\mathrm B},d_1, d_2}({\mathbb F}_{2^n})$ to the unique affine coordinates $(X_1/Z_1,Y_1/Z_1)$, it is sufficient to compute the inverse of $Z_1\in{\mathbb F}_2^*$ followed by two multiplications in ${\mathbb F}_{2^n}$. From Theorem~\ref{prop:GF2nlowdepthmult} we know that these two multiplications can be implemented in depth $\bigO(\log n)$, and we are left with the task of computing $Z_1^{-1}$. In \cite{ARS12b} a depth $\bigO(n\log n)$ circuit is proposed for this task which builds on a classical algorithm by Itoh ans Tsuji \cite{ItTs89}. This algorithm consists of two main parts followed by a final squaring in ${\mathbb F}_{2^n}$.

The two main parts require a total of $\bigO(\log n)$ multiplications in ${\mathbb F}_{2^n}$ and $\bigO(\log n)$ exponentiations with fixed powers of $2$. The latter maps are bijective and ${\mathbb F}_2$-linear and can be implemented as matrix-vector multiplications with a fixed matrix. Consequently all necessary computations in these two parts can be implemented in depth $\bigO(\log^2 n)$. The final squaring operation can again be realized as a matrix-vector multiplication with a fixed matrix, and so the complete inversion can be implemented in depth $\bigO(\log^2 n)$.
\begin{theorem}
There is a polynomial-size quantum circuit of depth $\bigO(\log^2 n)$ which on input a polynomial basis representation of $\alpha\in {\mathbb F}_{2^n}^*$ computes a polynomial basis representation of the inverse $\alpha^{-1}\in{\mathbb F}_{2^n}^*$. 
\end{theorem}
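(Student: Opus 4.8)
The plan is to follow the Itoh--Tsujii strategy for inversion in $\mathbb{F}_{2^n}$ and argue that every constituent operation can be realized in depth $\bigO(\log^2 n)$. Recall that for $\alpha\in\mathbb{F}_{2^n}^*$ one has $\alpha^{-1}=\alpha^{2^n-2}=\bigl(\alpha^{2^{n-1}-1}\bigr)^2$, so it suffices to compute $\beta:=\alpha^{2^{n-1}-1}$ and then square once. The key algebraic identity underlying Itoh--Tsujii is the recursion for the \emph{additive chain} on the exponents: writing $\alpha^{(k)}:=\alpha^{2^k-1}$, one has $\alpha^{(j+k)}=\bigl(\alpha^{(j)}\bigr)^{2^k}\cdot\alpha^{(k)}$, which combines one multiplication in $\mathbb{F}_{2^n}$ with one application of the Frobenius-power map $x\mapsto x^{2^k}$. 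Choosing the exponents $k$ according to the binary expansion of $n-1$ (or using a shortest addition chain for $n-1$), one reaches the exponent $n-1$ in $\bigO(\log n)$ steps, hence $\bigO(\log n)$ field multiplications and $\bigO(\log n)$ Frobenius-power maps, plus one final squaring.

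First I would make precise that each $\mathbb{F}_{2^n}$-multiplication can be done in depth $\bigO(\log n)$ by Theorem~\ref{prop:GF2nlowdepthmult}. Second, I would observe that each map $x\mapsto x^{2^k}$ is $\mathbb{F}_2$-linear (since squaring is the Frobenius automorphism and is therefore additive, and a composition of $\mathbb{F}_2$-linear maps is $\mathbb{F}_2$-linear), so with respect to the fixed polynomial basis it is a multiplication by a fixed matrix in $\mathbb{F}_2^{n\times n}$ that depends only on $p$ and $k$; by the matrix--vector multiplication technique developed in Section~\ref{sec:fieldmult} (the same argument used for the constant matrix $M$), such a map is implementable in depth $\bigO(\log n)$ using $\bigO(n^2)$ ancillae and only CNOT gates. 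The final squaring is the special case $k=1$, likewise depth $\bigO(\log n)$. Since the Itoh--Tsujii chain is \emph{sequential} in its $\bigO(\log n)$ steps, and each step costs depth $\bigO(\log n)$, the total depth is $\bigO(\log n)\cdot\bigO(\log n)=\bigO(\log^2 n)$, and the qubit count and gate count remain polynomial in $n$.

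One technical point I would address carefully is reversibility and ancilla management: the naive in-place chain accumulates garbage registers holding intermediate powers $\alpha^{(k)}$ and the fan-out copies used inside each multiplier and each matrix multiplication. Since $x\mapsto x^{2^k}$ is a bijection, it can be realized reversibly; and a multiplier of the form $\ket{\alpha}\ket{\beta}\ket{\gamma}\mapsto\ket{\alpha}\ket{\beta}\ket{\gamma+\alpha\beta}$ keeps its inputs, so intermediate results can be uncomputed by running the corresponding subcircuits in reverse after the final answer has been copied out. Because there are only $\bigO(\log n)$ intermediate results and each uncomputation costs depth $\bigO(\log n)$, this cleanup stays within the $\bigO(\log^2 n)$ budget while keeping the ancilla count polynomial.

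The main obstacle is not any single deep step but bookkeeping: one must verify that the exponents arising from the binary/addition-chain for $n-1$ really do produce $\alpha^{2^{n-1}-1}$ via the stated recursion $\alpha^{(j+k)}=(\alpha^{(j)})^{2^k}\cdot\alpha^{(k)}$ in $\bigO(\log n)$ sequential stages (this is the classical correctness of Itoh--Tsujii, which I would cite), and that interleaving the $\bigO(n^2)$-ancilla fan-out trees across the at most $\bigO(\log n)$ stages does not force the depth above $\bigO(\log^2 n)$ or the size above a polynomial. Given Theorem~\ref{prop:GF2nlowdepthmult} and the fixed-matrix multiplication primitive, both checks are routine, so the theorem follows.
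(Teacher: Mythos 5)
Your proposal is correct and follows essentially the same route as the paper: Itoh--Tsujii inversion realized with $\bigO(\log n)$ field multiplications (each in depth $\bigO(\log n)$ by Theorem~\ref{prop:GF2nlowdepthmult}) interleaved with $\bigO(\log n)$ Frobenius-power maps implemented as fixed $\mathbb{F}_2$-linear matrix--vector multiplications in depth $\bigO(\log n)$, plus a final squaring. Your additional discussion of the exponent recursion and of uncomputing intermediate results only makes explicit details the paper leaves implicit.
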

Implementing an ${\mathbb F}_{2^n}$-inverter in this way and combining it with two multiplication circuits as in Theorem~\ref{prop:GF2nlowdepthmult}, we can derive the unique affine representation of a curve point in polylogarithmic depth. More specifically, we have the following.
\begin{corollary}
There is a polynomial-size quantum circuit of depth $\bigO(\log^2 n)$ which on input a projective representation $(X_1,Y_1,Z_1)$ of a point on ${\mathrm E}_{{\mathrm B},d_1, d_2}({\mathbb F}_{2^n})$, returns the affine representation $(X_1/Z_1, Y_1/Z_1)$ of this point.
\end{corollary}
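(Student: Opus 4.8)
The plan is to obtain the circuit by simply chaining the ${\mathbb F}_{2^n}$-inverter of the preceding theorem with two copies of the logarithmic-depth multiplier of Theorem~\ref{prop:GF2nlowdepthmult}. First I would note that the precondition of the inverter is always met: a complete binary Edwards curve has no points at infinity (its identity is the affine point $(0,0)$), so every projective triple $(X_1,Y_1,Z_1)$ representing an actual curve point has $Z_1\in{\mathbb F}_{2^n}^*$. Hence we may apply the inversion circuit to $Z_1$.

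Concretely, I would proceed in three steps. \emph{Step~1 (inversion):} allocate a fresh $n$-qubit register in state $\ket{0}$ and apply the inverter to $Z_1$, producing $\ket{Z_1}\ket{Z_1^{-1}}$ together with internal ancillas returned to $\ket{0}$; by the preceding theorem this has depth $\bigO(\log^2 n)$ and polynomial size. \emph{Step~2 (two multiplications):} allocate two further fresh $n$-qubit registers in state $\ket{0}$ and use the multiplier $\ket{\alpha}\ket{\beta}\ket{\gamma}\mapsto\ket{\alpha}\ket{\beta}\ket{\gamma+\alpha\beta}$ once with $(\alpha,\beta)=(X_1,Z_1^{-1})$ and once with $(\alpha,\beta)=(Y_1,Z_1^{-1})$, so that these registers come to hold $\ket{X_1/Z_1}$ and $\ket{Y_1/Z_1}$. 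Since both multipliers read the register holding $Z_1^{-1}$, one either runs them sequentially, for total depth $2\cdot D_M(n)=\bigO(\log n)$, or first fans out $Z_1^{-1}$ into two copies by a `multi-fan-out CNOT with $\ket{0}$-input' (depth $\bigO(\log n)$) and then runs the two multipliers in parallel on disjoint wires; either way this step has depth $\bigO(\log n)$. \emph{Step~3 (cleanup):} run the inverter of Step~1 backwards to return the register that held $Z_1^{-1}$, together with its internal ancillas, to $\ket{0}$ (undoing the fan-out first if it was used); this again costs depth $\bigO(\log^2 n)$.

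Summing the depths of the three steps gives $\bigO(\log^2 n)+\bigO(\log n)+\bigO(\log^2 n)=\bigO(\log^2 n)$, and since each building block is of polynomial size the whole circuit is too, which yields the claim. The only point that requires a moment of care is the bookkeeping in Step~3: the affine pair does not determine $Z_1$, so one cannot literally \emph{replace} the projective triple by its affine form; the statement is therefore naturally read as producing $(X_1/Z_1,Y_1/Z_1)$ in an auxiliary register while leaving $(X_1,Y_1,Z_1)$ in place and all other ancillas reset to $\ket{0}$, which is exactly what the chained circuit above achieves and is all that is needed in the application.
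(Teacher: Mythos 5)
Your proposal is correct and follows essentially the same route as the paper, which likewise obtains the circuit by composing the $\bigO(\log^2 n)$ inversion of $Z_1$ with two applications of the logarithmic-depth multiplier from Theorem~\ref{prop:GF2nlowdepthmult}. The extra care you take with uncomputing $Z_1^{-1}$ and with the observation that the projective triple must be retained (since the affine pair does not determine $Z_1$) goes beyond what the paper spells out, but does not change the argument.
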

\section{Computation of a discrete logarithm in depth $\bigO(\log^2n)$}\label{sec:theend}

We now have all the pieces in place to establish our main result. 

\begin{corollary}
Shor's algorithm to compute the discrete logarithm on a complete binary Edwards curve can be implemented using a quantum circuit of polynomial size and depth $\bigO(\log^2n)$. 
\end{corollary}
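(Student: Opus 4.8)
The plan is to assemble the three ingredients the paper has already prepared---the Hadamard layer, the double scalar multiplication, and the final quantum Fourier transform---and to verify that each can be carried out in depth $\bigO(\log^2n)$, so that their composition is as well.  First I would handle the Hadamard layer: it consists of $2(n+1)$ Hadamard gates acting on disjoint qubits, hence has depth~$1$, which is trivially $\bigO(\log^2n)$.  Next, for the scalar-multiplication stage, I would invoke the Lemma of Section~\ref{sec:doubleAdd}, which realizes the map $\ket{k}\ket{\ell}\ket{0}\mapsto\ket{k}\ket{\ell}\ket{kP+\ell Q}$ in depth $\bigO(\log^2n)$ using the tree-based (Straus/Shamir) organization together with the logarithmic-depth point addition of Corollary~\ref{cor:lowdepthadd}.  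One subtlety is that this lemma produces a \emph{projective} representation of $kP+\ell Q$; by the Corollary of Section~\ref{sec:lowdepthdivision} we can then convert it, in additional depth $\bigO(\log^2n)$, to the unique affine representation, so that the state fed into the QFT has the well-defined form implicitly assumed in the description of Shor's algorithm.

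Then I would address the quantum Fourier transform $\mathrm{QFT}_{2^{2(n+1)}}$ on $2(n+1)$ qubits.  Here I would appeal to the standard fact (going back to Cleve--Watrous, and already exploited in this line of work) that an approximate QFT on $m$ qubits, accurate enough for Shor's post-processing, can be implemented in depth $\bigO(\log m)$ using a logarithmic-depth adder---or alternatively simply note that even the exact QFT has depth $\bigO(m)=\bigO(n)$, which together with the observation that only $\bigO(1)$ rounds of the discrete-log circuit are needed still yields a bound dominated by the $\bigO(\log^2n)$ term if one uses the approximate version; I would state the approximate-QFT depth bound $\bigO(\log n)$ explicitly and cite it.  Finally, the classical post-processing that extracts $r$ from the measurement outcomes is performed off the quantum device and contributes no quantum depth.

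Adding the contributions---depth $1$ for the Hadamards, $\bigO(\log^2n)$ for the tree-structured double scalar multiplication, $\bigO(\log^2n)$ for passing to affine coordinates, and $\bigO(\log n)$ for the (approximate) QFT---gives a total depth of $\bigO(\log^2n)$.  For the size bound I would observe that every building block (the $\bigO(\log n)$-depth $\mathbb F_{2^n}$-multiplier of Theorem~\ref{prop:GF2nlowdepthmult}, the $\bigO(\log^2n)$-depth inverter, each of the $\bigO(n)$ point-addition nodes in the binary tree, and the QFT) uses only polynomially many qubits and gates, and there are only polynomially many of them, so the overall circuit has polynomial size.  I would also remark that the ancilla registers introduced for fan-out and for the intermediate tree nodes must be uncomputed (by running the corresponding subcircuits in reverse) to keep the computation reversible and to avoid spurious entanglement before the third register is discarded; this uncomputation at most doubles the depth and size and so does not affect the asymptotics.

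The main obstacle, and the only place where real care is needed rather than bookkeeping, is the interface between the projective output of the scalar-multiplication tree and the QFT: one must be sure that deriving the unique affine representative (via the polylogarithmic-depth inversion of Section~\ref{sec:lowdepthdivision}) genuinely yields the canonical group-element encoding that Shor's algorithm and its correctness analysis presuppose, including correct handling of the identity point $(0,0)$, whose projective representatives have $Z=0$ and therefore cannot be normalized by inversion.  Dealing with that degenerate case---e.g.\ by detecting $Z_1=0$ and substituting the fixed affine code for $\mathcal O$---is the step I would expect to require the most attention, though it adds only constant depth.
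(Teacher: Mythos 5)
Your proof follows essentially the same route as the paper: depth $1$ for the Hadamard layer, the $\bigO(\log^2n)$ bound from the parallelized double-and-add lemma for $kP+\ell Q$, and the Cleve--Watrous approximate QFT in depth $\bigO(\log n)$ with an accuracy sufficient for Shor's post-processing. Your additional remarks on uncomputation and on normalizing the projective output to affine coordinates (including the $Z=0$ case) are points the paper handles in the preceding subsection rather than inside this proof, but they do not change the argument.
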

\begin{proof}
Referring to the three stages as in Figure~\ref{fig:shorcircuit}, we first note that the Hadamard gates on the left can be implemented in depth $1$. Next, as shown in Subsection \ref{sec:doubleAdd}, we can implement the operation that computes $k P + \ell Q$ in depth $\bigO(\log^2n)$ with a circuit of size polynomial in $n$. 
In order to complete the proof, we have to bound the depth of the Fourier transform ${\rm QFT}_{2^{2n+2}}$. For this, we use a result shown in \cite{CW:2000} that the Fourier
transform on $m$ qubits with target error $\varepsilon$ can be implemented with a circuit of depth $\bigO(\log{m}+\log\log{1/\varepsilon})$ and polynomial size. Choosing $m=n+1$ and $\varepsilon=1/2^{2m}$ is sufficient for Shor's algorithm to find the discrete logarithm  
with constant probability of success \cite{Sho97}, i.\,e., we can upper bound the depth of the ${\rm QFT}$ by $\bigO(\log{n})$ and hence the overall depth of the circuit by $\bigO(\log^2n)$. 
\hfill $\Box$
\end{proof}

\section{Conclusion and outlook}
The above discussion shows that the depth $\bigO(n^2)$ bound for discrete logarithm computations on ordinary binary elliptic curves can be improved exponentially: using parallelization at multiple levels, we can implement Shor's algorithm on complete binary Edwards curves in depth $\bigO(\log^2n)$. To show our result we introduced the first sublinear-depth circuits for ${\mathbb F}_{2^n}$-multiplication and ${\mathbb F}_{2^n}$-inversion, which may be of independent interest.
The price for the exponential reduction in depth is the introduction of ancillae---a parallelized ${\mathbb F}_{2^n}$-multiplier as discussed in Section~\ref{sec:fieldmult} adds $\bigO(n^2)$ additional qubits. Depending on the number of qubits available, for cryptographically significant parameters, say $n\ge 163$, trade-offs that put less emphasis on depth-optimization could be interesting. For instance, one could combine the parallelized double-and-add computation with a linear-depth field arithmetic. Even though not being depth-optimal, thousands of qubits could be saved in this way. One could also try to avoid the general point addition circuits in the parallelized double-and-add procedure (which involve many field multiplications) and instead rely on a sequential depth $\bigO(n\log n)$ solution with specialized point addition circuits as discussed in Section~\ref{sec:seq2a}. It is not clear at this point how the best trade-off for a large-scale discrete logarithm computation looks like.

It is interesting to compare our findings for bounding the depth of Shor's algorithm over an additive group on an ordinary binary elliptic curve to the case of factoring. In both cases the high-level structure---namely, phase estimation over an abelian group---of the algorithm is the same, however, the details on how to implement the arithmetic differ significantly. In the case of the elliptic curve arithmetic, the bottle neck are the 
addition formulae for points on the curve which involve finite field divisions. Our 
polylogarithmic depth implementation of Shor's algorithm may be compared with the findings of \cite{PS:2013} in which an $\bigO(\log^2n)$ upper bound on the depth for Shor's algorithm for integer factorization was given for a 2D nearest neighbor architecture and to \cite{Kutin2006}, where a circuit of depth $\bigO(n^2)$ was derived, albeit on a 1D nearest neighbor architecture.
The circuits presented in this paper assume an arbitrary coupling model and we leave it as an open problem whether they can be adapted to a 2D nearest neighbor architecture while maintaining the same upper bound on the circuit depth.

To enable parallelization, we make intensive use of `multi-fan-out CNOTs with $\ket{0}$-input', and one may hope for further depth improvements if multi-fan-out gates are provided. At the moment we do not know how to derive an asymptotic benefit of such gates that goes beyond a constant factor improvement. The remaining bottleneck for an asymptotic improvement in the ${\mathbb F}_{2^n}$-arithmetic seems the logarithmic depth for parity computations.
Finally, Amy et al.'s \cite{AMM13} observation that ${\mathbb F}_{2^n}$-multiplication can be realized in $T$-depth 2 suggests another natural direction for follow-up research: minimizing $T$-depth and the number of $T$-gates necessary to compute discrete logarithms on binary elliptic curves.

\paragraph{Acknowledgments.} We thank the reviewers for their constructive feedback and we thank Schloss Dagstuhl, Germany, for providing an excellent environment for part of this research through a \emph{Quantum Cryptanalysis} seminar. RS was supported by the Spanish \emph{Ministerio de Econom{\'\i}a y Competitividad} through the project grant MTM-2012-15167 and by NATO's Public Diplomacy Division in the framework of ``Science for Peace'', Project MD.SFPP 984520. This work was carried out while MR was with NEC Laboratories America, Princeton, NJ 08540, USA.


\begin{thebibliography}{10}

\bibitem{KaZa04}
Phillip Kaye and Christof Zalka.
\newblock {Optimized quantum implementation of elliptic curve arithmetic over
  binary fields}.
\newblock arXiv:quant-ph/0407095v1, July 2004.
\newblock Available at \url{http://arxiv.org/abs/quant-ph/0407095v1}.

\bibitem{MMCP09b}
Dmitri Maslov, Jimson Mathew, Donny Cheung, and Dhiraj~K. Pradhan.
\newblock {An $O(m^2)$-depth quantum algorithm for the elliptic curve discrete
  logarithm problem over GF$(2^m)$}.
\newblock {\em Quantum Information \& Computation}, 9(7):610--621, 2009.
\newblock For a preprint version see \cite{MMCP09}.

\bibitem{FIPS1864}
National Institute of Standards and Technology, Gaithersburg, MD 20899-8900.
\newblock {\em FIPS PUB 186-4. Federal Information Processing Standard
  Publication. Digital Signature Standard (DSS)}, July 2013.
\newblock Available at
  \url{http://nvlpubs.nist.gov/nistpubs/FIPS/NIST.FIPS.186-4.pdf}.

\bibitem{Sho97}
Peter~W. Shor.
\newblock {Polynomial-Time Algorithms for Prime Factorization and Discrete
  Logarithms on a Quantum Computer}.
\newblock {\em SIAM Journal on Computing}, 26(5):1484--1509, 1997.

\bibitem{ARS12b}
Brittanney Amento, Martin R{\"o}tteler, and Rainer Steinwandt.
\newblock {Efficient quantum circuits for binary elliptic curve arithmetic:
  reducing $T$-gate complexity}.
\newblock {\em Quantum Information \& Computation}, 13:631--644, July 2013.

\bibitem{ARS12}
Brittanney Amento, Martin R{\"o}tteler, and Rainer Steinwandt.
\newblock {Quantum binary field inversion: improved circuit depth via choice of
  basis representation}.
\newblock {\em Quantum Information \& Computation}, 13:116--134, January 2013.

\bibitem{CoFr06}
Henri Cohen and Gerhard Frey, editors.
\newblock {\em {Handbook of Elliptic and Hyperelliptic Curve Cryptography}}.
\newblock Discrete mathematics and its applications. Chapman \& Hall/CRC, 2006.

\bibitem{Sol98}
Jerome~A. Solinas.
\newblock {An Improved Algorithm for Arithmetic on a Family of Elliptic
  Curves}.
\newblock In Burton S.~Kaliski Jr., editor, {\em Advances in Cryptology --
  CRYPTO '97}, volume 1294 of {\em Lecture Notes in Computer Science}, pages
  357--371. Springer, 1997.

\bibitem{BLF08}
Daniel~J. Bernstein, Tanja Lange, and Reza~Rezaeian Farashahi.
\newblock {Binary Edwards Curves}.
\newblock In Elisabeth Oswald and Pankaj Rohatgi, editors, {\em Cryptographic
  Hardware and Embedded Systems -- CHES 2008}, volume 5154 of {\em Lecture
  Notes in Computer Science}, pages 244--265. International Association for
  Cryptologic Research, Springer, 2008.

\bibitem{Kitaev:97}
Aleksei~Yu. Kitaev.
\newblock Quantum computations: algorithms and error correction.
\newblock {\em Russian Math. Surveys}, 52(6):1191--1249, 1997.

\bibitem{GHM02}
Frederic Green, Steven Homer, Cristopher Moore, and Christopher Pollett.
\newblock {Counting, Fanout, and the Complexity of Quantum ACC}.
\newblock {\em Quantum Information \& Computation}, 2(1):35--65, 2002.

\bibitem{Mas88}
Edoardo~D. Mastrovito.
\newblock {VLSI designs for multiplication over finite fields $GF(2^m)$}.
\newblock In Teo Mora, editor, {\em Proceedings of the Sixth Symposium on
  Applied Algebra, Algebraic Algorithms and Error Correcting Codes}, volume 357
  of {\em Lecture Notes in Computer Science}, pages 297--309. Springer, 1988.

\bibitem{Mas91}
Edoardo~D. Mastrovito.
\newblock {\em {VLSI Architectures for Computation in Galois Fields}}.
\newblock PhD thesis, Link{\"o}ping University, Link{\"o}ping, Sweden, 1991.

\bibitem{MaHa04}
Arash Reyhani-Masoleh and M.~Anwar Hasan.
\newblock {Low Complexity Bit Parallel Architectures for Polynomial Basis
  Multiplication over $GF(2^m)$}.
\newblock {\em IEEE Transactions on Computers}, 53(8):945--959, 2004.

\bibitem{MMCP09}
Dmitri Maslov, Jimson Mathew, Donny Cheung, and Dhiraj~K. Pradhan.
\newblock {On the Design and Optimization of a Quantum Polynomial-Time Attack
  on Elliptic Curve Cryptography}.
\newblock arXiv:0710.1093v2, February 2009.
\newblock Available at \url{http://arxiv.org/abs/0710.1093v2}.

\bibitem{Str64}
Ernst~G. Straus.
\newblock {Addition chains of vectors (problem 5125)}.
\newblock {\em American Mathematical Monthly}, 70:806--808, 1964.

\bibitem{ElG84}
Taher ElGamal.
\newblock {A public key cryptosystem and a signature scheme based on discrete
  logarithms}.
\newblock In George~R. Blakley and David Chaum, editors, {\em Advances in
  Cryptology -- CRYPTO '84}, volume 196 of {\em Lecture Notes in Computer
  Science}, pages 10--18. Springer, 1985.

\bibitem{ItTs89}
Toshiya Itoh and Shigeo Tsujii.
\newblock {Structure of parallel multipliers for a class of fields $GF(2^m)$}.
\newblock {\em Information and Computation}, 83:21--40, 1989.

\bibitem{CW:2000}
Richard Cleve and John Watrous.
\newblock {Fast parallel circuits for the quantum Fourier transform}.
\newblock In {\em Proceedings of the 41st Annual Symposium on Foundations of
  Computer Science (FOCS'00)}, pages 526--536, 2000.

\bibitem{PS:2013}
Paul Pham and Krysta Svore.
\newblock {A 2D nearest-neighbor quantum architecture for factoring in
  polylogarithmic depth}.
\newblock {\em Quantum Information \& Computation}, 13(11\&12):937--962, 2013.

\bibitem{Kutin2006}
Samuel Kutin.
\newblock {Shor's algorithm on a nearest-neighbor machine}.
\newblock arXiv:quant-ph/0609001, September 2006.
\newblock Available at \url{http://arxiv.org/abs/quant-ph/0609001}.

\bibitem{AMM13}
Matthew Amy, Dmitri Maslov, and Michele Mosca.
\newblock {Polynomial-time T-depth Optimization of Clifford+T circuits via
  Matroid Partitioning}.
\newblock arXiv:quant-ph/1303.2042, March 2013.
\newblock Available at \url{http://arxiv.org/abs/1303.2042}.

\end{thebibliography}

\end{document}